\documentclass[UKenglish,cleveref, autoref, thm-restate, algorithmic, algorithm]{lipics-v2021}
\usepackage[ruled, noend, algo2e]{algorithm2e}
\usepackage[noend]{algorithmic} 

\usepackage{amsmath, amsfonts, amssymb}
\usepackage{todonotes}
\usepackage{booktabs}
\usepackage{numprint}
\usepackage{changepage}
\usepackage{float}
\usepackage{tcolorbox}
\usepackage{comment}
\usepackage{enumitem}
\usepackage{tabularray}
\usepackage{hyperref}
\newcommand{\competitorCHTree}{\texttt{CH\_Tree}\xspace}
\newcommand{\competitorCQTree}{\texttt{CQ\_Tree}\xspace}
\newcommand{\competitorSimpleBTree}{\texttt{Semi-Static}\xspace}
\newcommand{\competitorDPCH}{\texttt{DPCH}\xspace}
\newcommand{\competitorFDH}{\texttt{FDH}\xspace}

\title{Engineering Fully Dynamic Convex Hulls } 
\titlerunning{Engineering Fully Dynamic Convex Hulls}

\supplementdetails[subcategory={Source Code}, cite={}, swhid={}]{Software}{https://github.com/henrixapp/PracticalConvexHulles}

\supplementdetails[subcategory={Dataset}]{Software, Data Set \& Experimental Data}{https://doi.org/10.5281/zenodo.19359873}

 \author{Ivor van der Hoog}{IT University of Copenhagen, Denmark}{ivva@itu.dk}{https://orcid.org/0009-0006-2624-0231}{}

 \author{Henrik Reinstädtler}{Heidelberg University, Germany}{henrik.reinstaedtler@informatik.uni-heidelberg.de}{https://orcid.org/0009-0003-4245-0966}{}

\author{Eva Rotenberg}{IT University of Copenhagen, Denmark}{erot@itu.dk}{https://orcid.org/0000-0001-5853-7909}{}

 \authorrunning{van der Hoog, Reinstädtler and Rotenberg}

\funding{ Ivor van der Hoog and Eva Rotenberg are grateful to the VILLUM Foundation for supporting this research via Eva Rotenberg’s Young Investigator grant (VIL37507) “Efficient Recomputations for Changeful Problems”. Henrik Reinstädtler acknowledges support by DFG grant SCHU 2567/8-1.}
\begin{document}

\ccsdesc[100]{Theory of computation~Algorithm engineering}
\ccsdesc[100]{Theory of computation~Computational Geometry} 

\keywords{Convex hulls, fully-dynamic data structures, robustness}

\hideLIPIcs

\maketitle

\begin{abstract}
We present a new fully dynamic algorithm for maintaining convex hulls under insertions and deletions while supporting geometric queries.
Our approach combines the logarithmic method with a deletion-only convex hull data structure, achieving amortised update times of $O(\log n \log \log n)$ and query times of $O(\log^2 n)$.
We provide a robust and non-trivial implementation that supports point-location queries, a challenging and non-decomposable class of convex hull queries.

We evaluate our implementation against the state of the art, including a new naive baseline that rebuilds the convex hull whenever an update affects it.  On hulls that include polynomially many data points (e.g. $\Theta(n^\varepsilon)$ for some $\varepsilon$), such as the ones that often occur in practice, our method outperforms all other techniques.
Update-heavy workloads strongly favour our approach, which is in line with our theoretical guarantees. Yet, our method remains competitive all the way down to when the update to query ratio is $1$ to $10$.  

Experiments on real-world data sets furthermore reveal that existing fully dynamic techniques suffer from significant robustness issues. In contrast, our implementation remains stable across all tested inputs.
\end{abstract}

\setcounter{page}{0}

\newpage
\section{Introduction}

Let $P$ be a planar point set of size $n$. The convex hull of $P$ is the smallest~convex area enclosing $P$.
In analysis of spatial data, computing the convex hull is a fundamental task. 
The convex hull moreover frequently serves as a computational stepping stone towards further data processing. Seen from the \emph{static} perspective, the problem of computing the convex hull has received much attention.
In this paper, we study the \emph{dynamic} setting, where data points may arbitrarily be added, deleted, or altered (which can be handled by a deletion plus an insertion).
Convex hulls have many applications for evolving or changing data sets, illustrating the need for an efficient, practical, and publicly available dynamic convex hull implementation.
Convex hulls are among the most~studied objects in computational geometry~\cite{Avis1995How, brodal2002dynamic}, with applications in clustering~\cite{gao2017trajectory,liparulo2015fuzzy,sander1998density}, shape analysis~\cite{Furukawa1995, Wang2007Supervised}, data pruning~\cite{giorginis2022fast, khosravani2016convex, margineantu1997pruning, ostrouchov2005fastmap}, selection queries~\cite{ihm2014approximate, mouratidis2017geometric}, and road network analysis~\cite{gao2017trajectory,liu2019fast, yan2011efficient}.

\subparagraph{Convex hulls in literature.}
We denote by $CH(P)$ the convex hull edges in cyclic order and $h  := |CH(P)|$. 
For dynamic convex hulls, we make a distinction between \emph{explicit} and \emph{implicit} convex hulls. 
An \emph{explicit} convex hull data structure stores, at all times, the ordered sequence $CH(P)$ --- for example, as a balanced binary tree. 
The current state-of-the-art algorithm for maintaining an explicit fully dynamic hull is still the classical algorithm by Overmars and van Leeuwen~\cite{overmars1980dynamically}. Their \texttt{CQ-Tree} can explicitly maintain $CH(P)$ as a balanced tree of depth $O(\log h)$ subject to point insertions and deletions in $O(\log^2 n)$ time. This remains the most efficient \emph{explicit} theoretical method to date.
A small adaption of this technique called an \texttt{CH-Tree}~\cite{Gaede2024Simple} offers the same update time, worse $O(\log n)$ query time, but better practical performance. 
Other explicit works can achieve better runtime only by restricting the problem statement. 
Brewer et al.~\cite{brewer2024Dynamic} restrict $P$ to be an ordered set that form the vertices of a simple path. Specifically, the updates in \cite{brewer2024Dynamic} allow for appending points such that the resulting path is without any crossings, and the reverse pop operation that removes the previously appended points.
When explicitly maintaining the convex hull in a balanced tree, their updates require worst-case $O(\log n)$ time.
Alternatively, their updates take worst-case $O(1)$ time and they can make the hull explicit at will using $O(\log n)$ time. 
Wang~\cite{Wang2023Dynamic} offers the same trade-off, but requiring only $O(\log h)$ instead of $O(\log n)$ time, in the restricted update model where $P$ is required to remain $x$-monotone under append and dequeue updates.

\subparagraph{Implicit convex hulls.}
An \emph{implicit} convex hull data structure stores $P$ in a data structure subject to convex hull queries. The typical convex hull queries considered are:~\cite{chan2011three}:
\begin{enumerate}[noitemsep, nolistsep]
    \item \label{it:extreme}
    Finding the extreme point of $P$ in a query direction, 
    \item Deciding whether a line intersects $CH(P)$, 
    \item Finding the two hull vertices tangent to a query, \label{it:tangent}
    \item Deciding whether a query point $q$ lies in $CH(P)$, \label{it:inside}
    \item Finding the intersection with a query line, \label{it:intersectline}
    \item Finding the intersection between two hulls. \label{it:hullintersect}
\end{enumerate}

Of these queries, the first three are \emph{decomposable}. Formally, consider a partition of $P$ into point sets $P_1, \ldots, P_k$.
If we have, for the first three queries, the query answer for each part $P_i$ then we can compute the query answer on $P$ in $O(k)$ time. 
Decomposable queries nicely allow the application of divide-and-conquer approaches. 
Friedman et al.~\cite{Friedman1996Efficiently} considered implicit structures first. Their data structure was geared towards decomposable queries. 
They restrict $P$ to a simple path and support appending and depending points in amortised $O(\log n)$ and $O(1)$ time, respectively, and answer decomposable queries in $O(\log n)$ time. 

Chan's~\cite{chan2001dynamic}  structure has $O(\log^{1+\varepsilon} n)$ amortised update time and answers decomposable queries in $O(\log^{3/2} n)$ time. 
Chan later gave an improved algorithm with an expected $O(\log^{1+\varepsilon} n)$ query time for decomposable queries~\cite{chan2011three}.
Brodal and Jacob~\cite{Brodal2000Dynamic} support amortised $O(\log n \log \log n)$ updates and answer decomposable queries in $O(\log n)$ query time. 
The state-of-the-art by Brodal and Jacob~\cite{brodal2002dynamic} supports updates in amortised $O(\log n)$, decomposable queries in $O(\log n)$, and non-decomposable queries in $O(\log^{3/2} n)$ time.

\subparagraph{Insertion- or deletion-only structures.}
Preparata~\cite{Preparata1979} introduced the first explicit insertion-only convex hull algorithm, achieving $O(\log n)$ amortised update time while supporting queries in $O(\log h)$ time. A simpler folklore adaptation of Graham’s scan~\cite{graham1972efficient}, combined with a balanced binary tree, attains $O(\log h)$ update time and supports queries in $O(\log n)$ time.

Van der Hoog, Reinstädtler, and Rotenberg~\cite{vanderhoogetal2026} engineer efficient insertion-only solutions. They compare the two techniques above with a new approach based on the logarithmic method of Overmars. This latter technique supports updates in amortised $O(\log n \log \log n)$ time and queries in $O(\log^2 n)$ time, but its vector-based approach is often faster in practice.

Chazelle~\cite{Chazelle1985Convex}, and Hershberger and Suri~\cite{hershberger1992applications} maintain for a deletion-only sequence the convex hull in a doubly linked list. Deleting a size-$n$ point set takes $O(n \log n)$ total time. 

\subparagraph{Implementations of convex hull algorithms.}
The number of explicit implementations of dynamic convex hull data structures remains limited, though it is slowly growing. Implicit methods~\cite{brodal2002dynamic, chan2011three, chan2001dynamic} have not been implemented to date, largely because they rely on intricate cutting-based techniques that are challenging to realise in practice.

Perhaps surprisingly, standard geometric libraries such as the BOOST Geometry library~\cite{boost-geometry} and the Computational Geometry Algorithms Library (CGAL)~\cite{cgal} do not provide implementations of dynamic convex hull data structures. While CGAL does support dynamic Delaunay triangulations --- and the convex hull is a subgraph of this graph --- such triangulations incur poor linear-time update costs when used for hull maintenance.

Outside of standard libraries, we are aware of four available implementations:
\begin{itemize}
    \item Chi et al.~\cite{chi2013distribution} provide a Java implementation of the Overmars--van Leeuwen algorithm; however, it lacks query support, robustness, and efficiency~\cite{Gaede2024Simple}.
    \item Gæde et al.~\cite{Gaede2024Simple} present robust implementations of a \texttt{CQ-Tree} (which is the classical algorithm from \cite{overmars1980dynamically}) and derive from it their \texttt{CH-Tree} data structure. 
    \item Shirgure~\cite{sumeet_shirgure_2023_8396184} provides another implementation of the \texttt{CQ-Tree} from~\cite{overmars1980dynamically}.
    \item Van der Hoog et al.~\cite{vanderhoogetal2026} implement and evaluate several insertion-only techniques.
\end{itemize}

\subparagraph{Contribution.}
In the insertion-only setting, the implementation of Van der Hoog, Reinstädtler, and Rotenberg~\cite{vanderhoogetal2026} achieves significantly faster update times by avoiding tree rotations and pointer traversals. Their method, based on the logarithmic method of Overmars~\cite{overmars1983design}, partitions the input point set $P$ into buckets $\{B_i\}$ and stores $CH(B_i)$ in a contiguous vector.  Insertions merge buckets, invoking a linear-time static convex hull construction algorithm for inputs sorted by $x$-coordinate. While this yields low storage overhead and favourable memory access, storing $CH(B_i)$ in a vector cannot be made fully dynamic. 

In this paper, we revisit and generalise this approach. Instead of storing the convex hull of each bucket $B_i$ in a vector, we adapt the decremental convex hull technique from~\cite{Chazelle1985Convex, hershberger1992applications} to maintain a decremental hull within each bucket. This raises several challenges. First, to match $O(n \log n \log \log n)$ update time from~\cite{vanderhoogetal2026} as possible, we need to develop a linear-time construction algorithm of the data structures in~\cite{Chazelle1985Convex, hershberger1992applications} when the input is sorted.

Second, the logarithmic method is not designed for deletions. 
We show that the typical \emph{tombstoning} method used to proxy deletions does not work for convex hulls. Instead, we must rely on deletion-only data structures in each node which requires new insights and analysis. 

Third, the algorithms in~\cite{Chazelle1985Convex,hershberger1992applications} are not decremental in the traditional sense: they only guarantee a total running time of $O(n \log n)$ over $n$ deletions. This property complicates its integration into the logarithmic method. Moreover, the original algorithm maintains the convex hull only as a doubly linked list. 
This does not support efficient queries, as traversing a linked list takes linear time. 
Finally, since our approach partitions $P$ along buckets, we furthermore encounter difficulties when supporting non-decomposable queries. 

We compare our implementation  to all publicly available fully dynamic convex hulls. Our experimental study spans a wide range of synthetic and real-world data sets. Handling real-world data introduces further challenges, as geometric algorithms typically assume general position. Yet, practical data contains many points with identical $x$- or $y$-coordinates. We adapt our algorithm to handle such degeneracies robustly and demonstrate that existing implementations fail to do so.
Our proposed technique achieves substantially faster update times, at the cost of increased query times. For workloads with a mix of updates and queries, our approach is preferable, remaining competitive until queries account for more than $90\%$ of the workload. Overall, we provide new theoretical insights together with an efficient and robust implementation for fully dynamic convex hull maintenance under mixed workloads.

\section{Preliminaries}

Let $P$ be a set of $n$ points in the plane $\mathbb{R}^2$. 
The convex hull $CH(P)$ is the set of edges bounding the smallest convex region containing $P$. 
The \emph{upper convex hull} $CH^+(P)$ is defined as the convex hull of $P \cup \{(0,-\infty)\}$
and the \emph{lower convex hull} $CH^-(P)$ is defined analogously using $P \cup \{(0,\infty)\}$.
Then $CH(P)$ can be found by intersecting $CH^+(P)$ and $CH^-(P)$. 
As upper and lower hulls are symmetric, we focus on upper hulls.
Let $A$ and $B$ be point sets separated by a vertical line.
The upper hull $CH^+(A\cup B)$ consists of a prefix of $CH^+(A)$ and a suffix of $CH^+(B)$, joined by a single edge called a \emph{bridge}.
Formally, the bridge is the minimal segment $\overline{ab}$ with $(a,b)\in A\times B$ such that $A\cup B$ lies below its supporting line. 

\subparagraph{CH-Trees.}
The baseline data structure used by all fully dynamic convex hull approaches is  a \texttt{CH-Tree}. 
This is a balanced binary tree $T(P)$ that stores $P$ in its leaves, sorted by $x$-coordinate. 
For any node $x \in T(P)$, $\pi(x)$ denotes the points stored in the subtree under $x$. 
A \texttt{CH-Tree} stores for each node $x$ with children $u$ and $v$ the unique bridge between $CH^+(\pi(u))$ and $CH^+(\pi(v))$.  A  \texttt{CH-Tree} can be dynamically maintained in $3$ ways: classically it is augmented to a \textbf{CQ-Tree} which maintains in each node of the \texttt{CH-Tree} another balanced tree. Deletion-only, \texttt{CH-Trees} can be maintained in a simpler manner~\cite{Chazelle1985Convex, hershberger1992applications}. Recently, Gaede et al.~\cite{Gaede2024Simple} showed an algorithm to maintain a \texttt{CH-Tree} as a standalone data structure.

\subparagraph{CQ-Trees.}
Overmars and van Leeuwen~\cite{overmars1980dynamically} introduced the \texttt{CQ-Tree}, which is a \texttt{CH-Tree} that stores in each node $x \in T(P)$ a \emph{concatenable queue}. 
A concatenable queue $\mathbb{E}(x)$ stores the edges of $CH^+(\pi(x))$ that are not in $CH^+(\pi(y))$ where $y$ is an ancestor of $x$. 
They prove that these edges are a contiguous subsequence of $CH^+(\pi(x))$ and a concatenable queue stores this sequence in a balanced binary tree. 
Note that for the root $r$ of $T(P)$, $\mathbb{E}(x)$ stores $CH^+(P)$ as a balanced binary tree.
For any node $x \in T(P)$, they can construct $CH^+(\pi(x))$ top-down by iteratively splitting the convex hull of the parent node at the bridge, and merging it with the concatenable queue of the child node. 
Since splitting and merging balanced trees takes $O(\log n)$ time per operation, this takes $O(\log^2 n)$ total time. This procedure is then also used to maintain the underlying \texttt{CH-Tree}:
any point update updates a root-to-leaf path in $T(P)$. Along this path, they obtain for each node  $x$ the hulls $CH^+(\pi(u))$ and $CH^+(\pi(v))$ by splitting and merging concatenable queues. 
Given these hulls, they update the bridge of $x$ in logarithmic time. 
Since the root of \texttt{CQ-Tree} stores the convex hull in a balanced tree, a \texttt{CQ-Tree} on $P$ (and $P$ mirrored around the $x$-axis) can answer convex hull queries in $O(\log h)$ time where $h$ is the number of convex hull edges. 
Implementations of \texttt{CQ-Trees} are relatively inefficient since tree splits, merges, and rotations are quite expensive.

\subparagraph{Decremental CH-Trees.}
The decremental convex hull~\cite{Chazelle1985Convex, hershberger1992applications} adapts the approach from~\cite{overmars1980dynamically} for decremental convex hulls. 
It stores $\mathbb{E}(x)$ as a doubly linked list instead.
This has the benefit that splits and merges become constant-time but has a downside that searching over a hull takes linear time.
Let $n$ be the original size of $P$, their algorithm does not rotate $T(P)$ and instead only deletes nodes, maintaining a depth of $O(\log n)$ where $n$ is the original input size. 
Secondly, the algorithm can delete \emph{all} points of $P$ in $O(n \log n)$ total time. These are weaker guarantees than typical deletion-only analysis provides: where $k$ deletions can be supported in $O(k \log n)$ total time.
Under these conditions, \cite{Chazelle1985Convex, hershberger1992applications}  observe that 
if a point $p$ appears on $CH^+(\pi(x))$ for some node $x \in T(P)$ then it will remain on this hull until it is deleted. 
Upon deleting a point $p$, $T(P)$ is updated bottom-up. 
If for a node $x \in T(P)$, $p$ does not lie on $CH^+(\pi(x))$ then nothing changes.
Otherwise, the algorithm performs a linear scan to fill the 'gap' left by $p$ using the updated $\mathbb{E}(\pi(u))$ and $\mathbb{E}(\pi(v))$ for $x$'s children $u$ and $v$. 
This scan takes time linear in the number of points that fill the gap. 
Since each point found is a new point that will remain on $CH^+(\pi(x))$ since it is deleted, and each point appears in $\pi(x)$ for $O(\log n)$ nodes $x \in T(P)$, deleting all of $P$ takes $O(n \log n)$ total time.

\subparagraph{Dynamic CH-Trees.}
Gæde et al.~\cite{Gaede2024Simple} observe the practical cost of \texttt{CQ-Trees} and that a \texttt{CH-Tree} can be maintained with $O(\log^2 n)$ update time without storing any information besides the bridges.
For queries, they note again that using only bridges suffices. 
For example, a point $p$ is contained in $CH^+(P)$ if and only if there exists a bridge $e$ in $T(P)$ where $p$ lies below $e$. The result is a fully dynamic algorithm to compute a \texttt{CH-Tree} in $O(\log^2 n)$ time that can answer convex hull queries in $O(\log n)$ time (as opposed to the $O(\log h)$ time of \texttt{CQ-trees}). In practice, \texttt{CH-Trees} are more efficient as they require fewer tree operations. 

\subparagraph{Insertion-only.}
Van der Hoog et al.~\cite{vanderhoogetal2026} apply the logarithmic method to the insertion-only convex hull problem.
When the input point set $P$ is sorted by increasing $x$-coordinate, the upper convex hull $CH^+(P)$ can be computed in linear time using Graham’s scan~\cite{graham1972efficient}.
The logarithmic method maintains buckets $B_i$ of sizes $2^i$, for $i \in O(\log n)$.
Each bucket is either empty or completely filled, and all filled buckets store their points in sorted order.
Upon inserting a new point, the algorithm identifies the largest index $j$ such that all buckets $B_i$ with $i \le j$ are full.
These buckets are then cleared, and their content plus the new point are merged into bucket $B_{j+1}$.
A $O(\log n)$-way sorting merge takes $O(|B_{j+1}|\log\log n)$ time.
Once merged, the algorithm recomputes the upper convex hull $CH^+(B_{j+1})$ in linear time using Graham’s scan.
In this way, the point set $P$ is maintained as a disjoint union of $O(\log n)$ buckets, each with its own upper convex hull.
As each point is involved in at most $O(\log n)$ merges over the course of $n$ insertions, the amortised update time is $O(\log n \log\log n)$.

This method immediately answers any decomposable query at an $O(\log n)$ overhead by answering the query for each bucket. Non-decomposable queries are considerably more difficult and the focal point in~\cite{vanderhoogetal2026} is deciding whether a point $q$ lies in $CH(P)$.

\section{Theory and implementation}
\label{sec:theory}

We wish to dynamically maintain a point set $P$ to support queries on the (upper) convex hull.
Firstly we note that geometric algorithms typically assume that $P$ lies in a general position where no points share $x$ or $y$ coordinates. 
We note that for our real-world input instances, this assumption is very frequently violated and so we must deviate from this standard. 
Most algorithmic difficulties arise from having a shared $x$-coordinate and for queries on the upper convex hull it always suffices to keep for each $x$-coordinate only the point with maximal $y$-coordinate. 
Thus, our first data structure is a dictionary on $x$-coordinates where for each $x \in \mathbb{R}$ such that there exists a point $p \in P$ with $x = p.x$, we store a priority queue of all points in $P$ that share this $x$-coordinate. 
The value of each point in the priority queue is the $y$-coordinate and we only store the point $p_x^*$ with maximal $y$-coordinate in the remaining data structure.
If we insert a new point $p$ with the same $x$-coordinate but greater $y$-coordinate, we delete $p_x^*$ from the remaining data structure and insert $p$ instead. 
As a consequence, we assume that $P$ denotes the `remaining' point set where each point has a unique $x$-coordinate.

\subparagraph{The logarithmic method.}
The logarithmic method by Overmars~\cite{overmars1983design} is a technique for making any static data structure with decomposable queries an insertion-only data structure, with a factor $O(\log n)$ overhead in update and query time. 
Its classical application is in rank queries, where we want to maintain a set of values $V$ and ask for a query value $q$ how many $v \in V$ are smaller than $q$. 
Rather than maintaining $V$ in a balanced tree, one can instead partition $V$ across buckets $B_i$ where each bucket $B_i$ has size $2^i$ and is either empty or full.
Full buckets store their content in sorted order. 
Inserting a new value $v'$ into $V$, identifies the maximum integer $j$ such that $B_1, \ldots, B_j$ are all full and invokes $\texttt{Merge}(j)$. 
This procedure clears all these buckets and puts their content (and $v'$) into $B_{j+1}$ via a $j$-way merge which takes $O(|B_{j+1}| \log j)$ time. 
This structure requires linear space. 
Since after $n$ insertions, each point participates in at most $O(\log n)$ merges the total running time after $n$ insertions is $O(n \log n \log \log n)$.
Any decomposable query can be answered by querying each bucket $B_i$ separately. E.g., the rank of a query $q$ equals the sum of the ranks in each bucket. 

\subparagraph{Complications for convex hulls.}
A convex hull can be constructed in linear time, if the input is sorted by $x$-coordinate.
Van der Hoog, Reinstädtler, and Rotenberg~\cite{vanderhoogetal2026} use this technique to store $P$ subject to convex hull queries using the logarithmic method.
They augment the procedure  $\texttt{Merge}(j)$ such that after bucket $B_{j+1}$ is filled, one can invoke Graham's scan~\cite{graham1972efficient} to compute $CH^+(B_{j+1})$ in linear time which incurs no asymptotic overhead. 
Decomposable queries can then simply be handled by querying all $O(\log n)$ separately but many convex hull queries are non-decomposable. Van der Hoog, Reinstädtler, and Rotenberg~\cite{vanderhoogetal2026} focus on point location (although all other queries can be handled by adapting this procedure). 
They observe the following: if, for a bucket $B_i$, a query point $q$ lies inside $CH^+(B_j)$ then $q$ lies inside $CH^+(P)$. The difficult case arises when $q$ lies outside of $CH^+(B_i)$ for all $i$, as it may still be the case that $q$ lies inside $CH^+(P)$. 
They compute for each bucket $B_i$ the \emph{tangents}, which are the edges $(a_i, q)$ and $(q, b_i)$ of $CH^+(B_i \cup \{ q \})$, in logarithmic time.
Let $X = \{ a_i \} \cup \{ b_i \}$. They prove that $q$ is inside $CH^+(P)$ if and only if $q \in CH^+(X)$. Since $X$ has at most $O(\log n)$ points, the running time of this procedure is dominated by $O(\log^2 n)$. 

We will also apply the logarithmic method, and thus require the same consideration.
However, more problems arise  with deletions. Classically, deletions are handled through \emph{tombstoning}. Instead of deleting a value $v$ from $V$ one inserts a \emph{tombstone} $v^{-1}$. During $\texttt{Merge}(j)$ we can detect that $B_{j+1}$ contains both $v$ and $v^{-1}$ and delete both values permanently. 

Under tombstoning, buckets contain both values and tombstones, which are sorted separately. 
Tombstones are useful for queries that return cumulative info over all the input.
For example for rank queries, one can compute for a query $q$, for each bucket $B_i$, the number of values and tombstones in $B_i$ that precede $q$.
The rank of $q$ is the total number of values in the data structure that precede $q$, minus the number of tombstones that precede $q$ and rank queries can thus still be answered in $O(\log^2 n)$ time. 
For convex hulls, this entire technique breaks down as the answer is a Boolean and not a sum: 
Consider the naive scenario where there are two buckets $B_2$ and $B_6$ where $B_2$ contains only tombstones of values in $B_6$. 
It may be that a query $q$ lies under some edge $(a, b)$ of $CH^+(B_6)$, but that $B_2$ contains the tombstones $a^{-1}$ and $b^{-1}$.
When removing these points from $CH^+(B_6)$ it may be that $q$ lies above the convex hull, or below some edge whose endpoints are not even on $CH^+(B_6)$ and any algorithm has no way of knowing which is which, without performing actual deletions. 

\subsection{Our data structure}\label{sec:our:datastructure}
Since we cannot implement tombstoning, we are required to adapt the logarithmic method and use deletion-only data structures instead. At all times, we partition $P$ along buckets $B_i$ of size $2^i$. Contrary to the standard logarithmic method, we do not require that all buckets are either empty or full. 
For each bucket $B_i$, we store the deletion-only convex hull data structure from~\cite{Chazelle1985Convex, hershberger1992applications} denoted by $T(B_i)$. 
The trees $T(B_i)$ are never rotated. This allows us for an efficient implement the tree as a vector. We will ensure this property for all trees used throughout this paper to increase efficiency. We first adapt the \texttt{Merge} procedure:

\subparagraph{Defining $\texttt{Merge}(j)$.}
We develop an algorithm to construct $T(P')$ in linear time given a sorted point set $P'$.
Our merging algorithm first does an $O(j)$-way merge to obtain a point set sorted by $x$-coordinate and then constructs the deletion-only data structure in linear time. 
Contrary to the classical method, we will no longer have the guarantee that the bucket $B_{j+1}$ is empty when we invoke this procedure. 
Instead, we will keep the invariant that $B_{j+1}$ is at most quarter-full.
On a high level, we clear the content of all buckets $B_1, \ldots, B_{j+1}$.
We then do a $(j+1)$-way merge. For efficiency, we implement this merge in-place using a \emph{loser tree}.
A loser tree is a balanced tree with $(j+1)$ leaves, each leaf $i$ storing the remaining minimum value in $B_i$. 
Each node stores the smaller value of its children and so the root stores the remaining minimum value in $B_1, \ldots, B_{j+1}$. Iteratively remove this minimum value, updating a path of length $O(\log j)$ in this tree, merging all values from $B_1, \ldots, B_{j+1}$ in sorted order. We then find two buckets $B_a$, $B_b$ with $a, b \in [1, j+1]$ such that all content fits in these buckets and $B_a$ and $B_b$ and we split the merged content into these buckets.

\subparagraph{Dynamic maintenance.}
For deleting some $p$ from $P$, locate the unique bucket $B_i$ and delete $p$ from $T(B_i)$. 
At all times, we count the number of remaining elements in $T(B_i)$.
If this number drops below $2^{i - 2}$ then we trigger $\texttt{Merge}(i - 1)$.
When inserting a point $p$ into $P$, we identify the maximum integer $j$ such that $B_1, \ldots, B_j$ are non-empty and trigger $\texttt{Merge(j)}$.
Proving that this takes $O(n \log n \log \log n )$ amortised time requires some careful analysis:

\begin{theorem}
    The above procedure takes amortised  $O(n \log n \log \log n )$  time.
\end{theorem}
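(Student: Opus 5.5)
We read the statement as: over any sequence of $n$ updates, the total cost is $O(n \log n \log\log n)$, i.e.\ amortised $O(\log n \log\log n)$ per update. The proof would be a token (potential) argument.

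The costs to bound are of three kinds. A single $\texttt{Merge}(j)$ costs $O(|B_1|+\dots+|B_{j+1}|)\cdot O(\log j)$ for the loser-tree merge. Rebuilding $T(B_a)$ and $T(B_b)$ on sorted input costs linear time. Finally, the decremental trees incur deletion work, which by the analysis of~\cite{Chazelle1985Convex, hershberger1992applications} totals $O(|B_i|\log|B_i|)$ over the lifetime of one build of $T(B_i)$. So it suffices to show two things. First, the total size of all \emph{built} structures, summed over all rebuilds, is $O(n\log n)$, with each rebuild charged with an extra factor $O(\log\log n)$. Second, the deletion work, summed over builds, is covered by the same $O(\log n)$ factor. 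The second follows from the first, since a build of size $m$ has deletion cost $O(m\log m)$, which is $O(m \log n)$.

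The plan is to give each point $\Theta(\log n)$ \emph{level tokens}, each worth $O(\log\log n)$ work, and to charge every unit of merge or build work to a token.

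\textbf{Insertions.} Here $\texttt{Merge}(j)$ is triggered when $B_1,\dots,B_j$ are non-empty. Fix the invariant that every non-empty bucket $B_i$ with $i$ below the level of its last rebuild holds at least $2^{i-2}$ points. This holds because a drop below $2^{i-2}$ immediately triggers $\texttt{Merge}(i-1)$. Then the merged content has size at least $\sum_{i\le j}2^{i-2}=\Omega(2^{j})$ and at most $2^{j+1}+\frac14 2^{j+1}$, using the invariant that $B_{j+1}$ is at most quarter-full. The work $O(2^{j}\log j)$ is therefore paid by the $\Omega(2^j)$ points present, each spending $O(\log\log n)$. We must also argue that each point pays in this way only $O(\log n)$ times. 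A point moves into a bucket of index $\ge$ its old index, or into one of the two target buckets $B_a,B_b$. I would choose the split so that $a$ and $b$ are the smallest indices whose capacities fit the content at most half-full. Then the buckets resulting from an insertion merge have level strictly increasing for at least a constant fraction of the points. This makes the per-point level a potential that rises by one per charged merge, bounded by $\log n$.

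\textbf{Deletions.} A deletion that brings $|B_i|$ below $2^{i-2}$ triggers $\texttt{Merge}(i-1)$, whose content has size $O(2^{i})$. To pay for it, note that since $B_i$ was last built with at least (roughly) $2^{i-1}$ points, $\Omega(2^{i})$ deletions have hit $B_i$ since then. Charge $O(\log\log n)$ to each such deletion. Each deletion is charged this way at most once per bucket it lives in, i.e.\ once. Points moved to lower levels by such merges may have their level potential decrease, so they regain budget. This has to be paid by the deleted points as well: give each deletion $O(\log n\log\log n)$ credit, which restores $O(\log n)$ level tokens to each of the $O(2^i)$ relocated points, using the ratio $\Omega(2^i)$ deletions to $O(2^i)$ relocated points.

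\textbf{Main obstacle.} The hardest step is making the potential monotone despite non-empty targets and two-bucket splits. I would define $\Phi=\sum_{p}c\,(\log n-\ell(p))\log\log n$, where $\ell(p)$ is the index of $p$'s bucket, plus $c'\log\log n$ times the number of deletions since each bucket's last rebuild. Then I would verify case by case, for insertion merges and for deletion merges, that the actual cost plus the change in $\Phi$ is $O(\log n\log\log n)$ per operation. In parallel I would check that the quarter-full invariant on $B_{j+1}$ is preserved by the choice of $a$ and $b$.
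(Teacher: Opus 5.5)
\textbf{Main gap: the decremental work.} Your reduction of the decremental work to the rebuild count does not give the claimed bound. You pay each build of size $m$ its full $O(m\log m)=O(m\log n)$ deletion cost. Summed over builds whose sizes total $O(n\log n)$, this yields $O(n\log^2 n)$, not $O(n\log n\log\log n)$. The $\log n$ rebuilds per point and the $\log n$ inside one decremental structure multiply; they are not ``the same factor''.

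The missing idea is that $O(m\log m)$ is the cost of deleting \emph{all} $m$ points, so it must not be re-paid at every rebuild. Its steady-state part has to be paid by the deletions themselves, since a point is deleted only once in the whole sequence. Only the front-loaded part still outstanding when a bucket is dissolved may be passed to a merge. The paper's proof is built this way:
\begin{itemize}
\item each deletion adds its actual cost to a charge and removes a fixed amount;
\item at $\texttt{Merge}(j)$ the residual charge of $B_1,\dots,B_j$, at most $O(2^j\log 2^j)$ by the doubling sizes, is passed to whatever triggered the merge;
\item the result is closed off with a geometric sum.
\end{itemize}
This step is delicate even there. A residue of $O(2^j\log 2^j)$ at every call, spread over the $\Omega(2^j)$ updates that pay for that call, is $O(j)$ per update per level. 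The paper's concluding sum counts only the first call of each $\texttt{Merge}(j)$. So a complete argument needs a finer handle on the front-loaded cost than the crude $O(m\log m)$ bound.

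\textbf{Second gap: the merge accounting.} Your merge accounting takes a different route from the paper's and is not finished. The level potential pays only if each merge raises the level of a constant fraction of its content and lowers almost nothing. With two target buckets, part of the old $B_j$ can land in a lower-indexed target even in an insertion-triggered merge; the paper explicitly allows $a,b<j+1$. Any net increase of $\Phi$ in such a merge cannot be paid by the single inserted point. You assert the constant-fraction gain for your split rule but do not show it, and that rule is not the one in the procedure being analysed.

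The paper needs no per-point levels:
\begin{itemize}
\item A deletion-triggered $\texttt{Merge}(j)$ is charged to the $\Omega(2^j)$ deletions that shrank $B_{j+1}$ from at least half-full to below a quarter.
\item An insertion-triggered $\texttt{Merge}(j)$ starts with $B_{j+1}$ empty. When it does not fill $B_{j+1}$, it leaves at least $\frac14 2^j$ free slots among $B_1,\dots,B_j$. Hence $\Omega(2^j)$ insertions must occur before $\texttt{Merge}(j)$ fires again, and it is charged to those.
\end{itemize}
Each update is thus charged $O(\log\log n)$ at most once per level. This gives the merge part of the bound without any monotonicity.
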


\begin{proof}
    Our algorithm incurs running time in two ways: through deletions in $T(B_i)$ for some bucket $B_i$ and through $\texttt{Merge}(j)$ for indices $j$. 

    We first analyse the running time by $\texttt{Merge}(j)$ for $j \in O(\log n)$. 
    Note that $\texttt{Merge}(j)$ takes $O(2^j \log \log n)$ time.
    We make a case distinction for how this event gets triggered. 
    If this event is triggered by deletions, then the bucket $B_{j+1}$ dropped below size $2^{j-1}$.
    Since the bucket $B_{j+1}$ as size at least $2^{j}$ when it is initialised, we can charge $O(\log \log n)$ time to each deleted element to pay for this merge. Since these elements are deleted, they can never be charged again. 
    Insertions are more complicated as $\texttt{Merge}(j)$ can populate buckets $B_a$ and $B_b$ and it may be that $a, b < j + 1$. 
    The key observation is that if we invoke  $\texttt{Merge}(j)$  with an insertion then $B_{j+1}$ is empty.  If this procedure \emph{does not} populate $B_{j+1}$ then there are at most $2^j$ elements in $B_1, \ldots, B_j$. 
    The buckets $B_1, \ldots, B_j$ fit, together $2^{j+1} - 1$ elements.
    We greedily fill at most two buckets which are both at least half-full. 
    It follows that this procedure leaves some buckets in $B_1, \ldots, B_j$  empty and that the number of empty slots in these buckets is at least $\frac{1}{4}2^j$. 
    As a consequence, we require at least  $\frac{1}{4}2^j$ new insertions before we trigger $\texttt{Merge}(j)$ again and we charge these insertions $O(\log \log n)$. 
    In the resulting charging scheme, we only pay for the first time $\texttt{Merge}(j)$ gets invoked for all $j \in O(\log n)$ and thus the amortised running time after $n$ updates is $O(n \log n \log \log n)$.

    The data structure $T(B_i)$ can support a sequence of $|B_i|$ deletions in $O(|B_i| \log |B_i|)$ time.
    However, this running time can be front-loaded: the first $O(\log |B_i)$ deletions in $T(B_i)$ may all take $O(|B_i|)$ time. The classical argument guarantees that the remaining deletions take constant time and charges the total running time to these non-costly deletions, ensuring amortised $O(\log |B_i|)$ time deletions.
    However in our case, the procedure $\texttt{Merge}(j)$ for $j \geq i-1$ may merge the content of $B_i$ into some other bucket --- where afterwards this front-loaded behaviour can occur again. We can incorporate this into the analysis by identifying a geometric sum to erase this charge.
    Formally, each time we perform a deletion on a bucket $B_i$ we add the running time that the deletion takes as a \emph{charge}.
    For each deletion, we also reduce the charge by some constant. 
    If we trigger $\texttt{Merge}(j)$, then we charge all the built-up charge in buckets $B_i$ for $i \leq j$ to what triggered the merge. 
    Due to bucket sizes doubling as we increase $i$, there can be at most $O(2^j \log (2^j))$ total charge. The first time we invoke $\texttt{Merge}(j)$, we pay that charge to the overall running time. 
    We already showed that for each subsequent call to  $\texttt{Merge}(j)$ is called, $\Omega(2^j)$ unique updates can me charged for this call. 
    It follows that the overall running time is at most:
    $
    \sum\limits_{j = 1}^{\log n} O(2^j \log (2^j)) \subset O(n \log n). \qedhere
    $
\end{proof}

\subparagraph{Further challenges.}
We cannot present all implementation details in the main body, and will refer to the provided code repository for details. However, we briefly highlight the challenges that we encountered and that our codebase handles in a non-trivial manner.

Firstly, we require as a subroutine that for any sorted point set $P$,  we can construct the data structure $T(P)$ from~\cite{Chazelle1985Convex, hershberger1992applications} in linear time. It may not be surprising that this is possible, yet there are many details to implementing this. 
$T(P)$ is a balanced tree that sorts $P$ on $X$. For each node $x \in T(P)$ it requires us to identify the contiguous subsequence of $CH^+(\pi(x))$ that is not part of $CH^+(\pi(y))$ where $y$ is the parent of $x$ --- all in total linear time. 
This requires a careful bottom-up construction where for each depth $d$ we first construct for all nodes $y$ at depth $d$ the hull $CH^+(\pi(y))$ from all $CH^+(\pi(x))$ for nodes $x$ at depth $d+1$. 

Secondly, we implement point location queries which are non-decomposable. In each bucket, we store a \texttt{CH-Tree} $T(B_i)$. 
 Gæde et al.~\cite{Gaede2024Simple} showed that a \texttt{CH-Tree} suffices to answer standard convex hull queries. However, we have $P$ partitioned across buckets and thus require the query approach by~\cite{vanderhoogetal2026}. This approach requires an additional query which for every point $x$ not under $CH^+(B_i)$ finds its two \emph{tangents}.
  Gæde et al.~\cite{Gaede2024Simple} can find these tangents in $O(\log^2 n)$ time by inserting the query point, which is not fast enough. Finding these tangents in $O(\log n)$ time requires considerable care. 
The result is an algorithm with amortised $O(n \log n \log \log n)$ update time that facilitates queries in $O(\log^2 n)$ time. 

\section{Experiments}
In this section we evaluate our algorithm from Section~\ref{sec:theory} and compare it to state-of-the-art methods.
We benchmark  insertion, deletion, and query time and check for query correctness.
We evaluate 5 fully-dynamic algorithms, all implemented in C++:
\begin{itemize}
\item \competitorSimpleBTree, a naive algorithm that stores the convex hull in a vector and the input points in sorted order. Each update, we first test in logarithmic time if the update changes the convex hull. If so, we recompute the convex hull in linear time using Graham scan. 
    \item \competitorCHTree and \competitorCQTree from \cite{Gaede2024Simple}
    \item \competitorDPCH from \cite{sumeet_shirgure_2023_8396184}
    \item \competitorFDH the implementation of Section~\ref{sec:our:datastructure} with initial bucket size of 32 and 1024. %
\end{itemize}
\subparagraph*{Data sources}
Following previous work on static and dynamic convex hull computation \cite{Cadenas2019Preprocessing,gamby2018convex,Gaede2024Simple,vanderhoogetal2026} we test on data verifying real-world, synthetic  and scaling data.

\emph{Real-World Data.}
We use three data sets proposed by \cite{vanderhoogetal2026} for insertion-only experiments  and enhance them with deletions.
We delete 50\% of overall points and mix query, insertion and deletion randomly, while making sure that deleted points have been inserted prior.
The first data set is the 3D \textbf{Mammal} set, also used by \cite{Cadenas2019Preprocessing,Mei2016Gang}, containing 3D scans of mammalian bodies. We test every 2D-projection combination.
The \textbf{Tiger} set, used for range queries~\cite{Govindarajan2003CRB,Danzhou2002Efficient}, contains geographical features from the US Census Bureau~\cite{tiger2006se}.
Lastly, we use the \textbf{Cluster} set~\cite{ClusteringDatasets}, which is widely used in benchmarking clusters~\cite{Chang2008Robust, Fu2007FLAME, Gionis2007Clustering, Jain2005Data, Zahn1971Graph}.

\emph{Synthetic Data.}
For synthetic data we use the generators from \cite{gamby2018convex,Gaede2024Simple} with the parameters set in \cite{vanderhoogetal2026}.
The four generators follow simple geometric patterns and are listed in Table~\ref{tab:generators}.
Similarly to the real-world data set, we test two rounds of the schema: insertion, query, and deletion.
In the first round, we insert one million points, run one million queries, and delete half of the points.
In the second round, we add one million additional points, run one million queries, and delete a quarter million points.
This ensures that we a fully dynamic scenario. For every generator we have ten instances with different seeds.

\emph{Scaling Data.}
Similar to~\cite{vanderhoogetal2026} we benchmark our algorithms on input scaling from  $2^{20}$ to $2^{24}$ with the four synthetic generators as input.
For every generator, we use 5 instances with 50~\% deletions and 5 instances with 25~\% deletions. Between insertion and deletion we run equally many queries as the input size.
Every instance uses a different seed.

\emph{Query Correctness}
We checked whether the reported size of the convex hull and the number of true queries match.
For synthetic data, for which points are in general position, all methods report the same numbers.
On the real world datasets, which contain partially duplicate or fully duplicate points, the number of true answers varies, while \competitorCQTree crashes on the sequences generated from the \textbf{Cluster} data set.
Only \competitorSimpleBTree and our methods \competitorFDH agree on the number of true queries.
Moreover, we were able to engineer a point-sequence that lets \competitorCHTree loop indefinitely.
We conclude that the other methods are not capable of maintaining convex hulls in non-general positions.

\begin{table}[H]
    \centering
    \begin{tabular}{>{\bf}llr}
    \toprule
    Distribution & \bf Description & \bf \# Hull-points\\\midrule
       Box  & Uniformly sampled from a square (side-length=1000)  & $\Theta(\log{n})$\\
       Bell & 2D-Gaussian & $\Theta(\log{n})$\\
       Disk & Uniformly sampled from disk (radius=1000) & $\Theta(n^{1/3})$\\
    Circle & Uniformly sampled from circle (radius=1000) & $\Theta(n)$\\
         \bottomrule
    \end{tabular}
    \caption{Distributions used for generating point clouds, parameters follow \cite{vanderhoogetal2026}.}
    \label{tab:generators}
\end{table}

\emph{Methodology and organisation.}
We use two systems equipped with a 3.10Ghz Xeon w5-3435X processor (45MB cache) and 128 GB of RAM and schedule up to 4 experiments in parallel.
The memory limit per experiment is 90 GB.
We only compare results obtained from the same machine.
Every experiment with real-world data is repeated 5 times and 3 times with synthetic data.
Every reported metric is the mean of these running times.
Section~\ref{sec:realworld} contains the discussion of the real-world data sets. Section~\ref{sec:synth} discusses the results obtained from synthetic benchmarks, while Appendix~\ref{sec:scaling} discusses scaling results.

\subsection{Real world Data}\label{sec:realworld} 

We first discuss the experimental results on the real-world data sets.
Our original intention was to use these data sets to compare algorithmic performance.
Instead, the experiments reveal that they primarily serve as a robustness check, as there are many points with duplicate $x$- and $y$-coordinates.
\competitorSimpleBTree, which is essentially a static algorithm and therefore comparatively easy to make robust, serves as the ground truth.
We observe the following:

\competitorDPCH segfaults on nearly half of the queries on the \textbf{Mammals} data set and crashes outright on \textbf{Cluster}.
It furthermore times out on \textbf{Tiger}.
The same holds for \competitorCQTree, except that it does not time out on \textbf{Tiger} but instead crashes.
Finally, \competitorCQTree leaks and runs out of memory on \textbf{Tiger}.
In contrast, our approach is robust across all data sets.

\subparagraph{Comparing update times.}
As a consequence of the robustness issues, our ability to compare these algorithms on this data set is limited.
We nevertheless highlight a few trends that also persist on synthetic data (see Figure~\ref{fig:realworld:update}).
On \textbf{Clustering} and \textbf{Tiger}  (and, more generally, on data sets with sparse convex hulls), the naive \competitorSimpleBTree algorithm exhibits very strong performance.
If the convex hull is sparse, many point deletions remove points that do not lie on the convex hull, while many point insertions place points deep inside the hull.
In both cases, \competitorSimpleBTree detects this situation in logarithmic time and terminates early.
One could therefore argue that, for many real-world scenarios, a fully dynamic data structure is not strictly necessary.
If we discount this phenomenon, and further exclude \competitorCQTree and \competitorFDH due to their frequent crashes, we can meaningfully compare our \competitorFDH algorithm with \competitorCHTree.
The performance difference between the two \competitorFDH implementations is marginal, with a bucket size of 1024 being slightly more efficient.
Overall, our approach is an order of magnitude faster than \competitorCHTree in terms of update time.
This trend also persists on synthetic data.

\begin{figure}[H]
    \centering
      \includegraphics[]{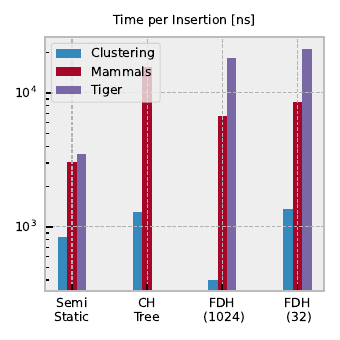}
    \includegraphics[]{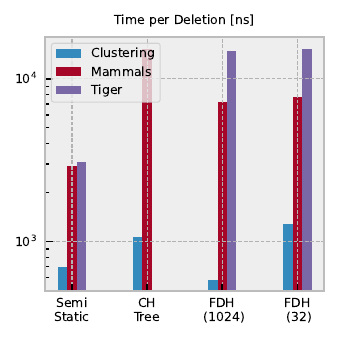}
    \caption{Comparing update times in nanoseconds of our algorithms on real-world data.}
    \label{fig:realworld:update}
\end{figure}

\subparagraph{Comparing query times.}
When comparing query times, we see a different image (Figure~\ref{fig:realworld:query}).
\competitorSimpleBTree dominates in performance.
This is to be expected: it statically stores the convex hull in a vector offering excellent cache locality and minimal overhead. 
We therefore conclude that between its good update performance and excellent query performance, a naive non-dynamic algorithm is the preferred method for moderately-sized real-world data sets.
If we compare \competitorFDH to \competitorCHTree, we observe that our queries are worse. This matches the theoretical fact that we have an asymptotic $O(\log^2 n)$ time query versus the $O(\log n)$ time query of a \competitorCHTree. 
We argue that comparing the static query performance of dynamic algorithms is of limited value and that our real-world data is of moderate size.  Instead, we should evaluate these algorithms on large-scale data sets and we should evaluate their overall performance under mixed sequences of updates and queries. For synthetic data, where we can ensure all algorithms do not crash, we can provide this comparison.

\begin{figure}[t]
    \centering
      \includegraphics[]{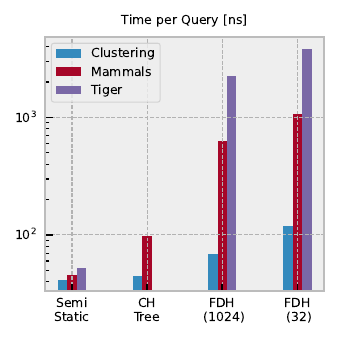}
    \includegraphics[]{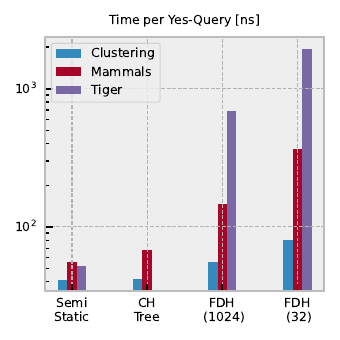}
    \caption{Comparing update times in nanoseconds of our algorithms on real-world data.}
    \label{fig:realworld:query}
\end{figure}

\subsection{Synthetic Experiments}\label{sec:synth}
Our synthetic data is drawn from four distributions, sampling points from a \textbf{Box}, a \textbf{Bell} curve, a \textbf{Disk}, or a \textbf{Circle}.
Our sampling procedure allows us to generate data sets of arbitrary size.
Moreover, since the sampled point sets contain very few points with shared coordinates, none of the competing algorithms crash on this data, and the rate of incorrect query answers is very low (but not zero).
This enables a fair comparison between the algorithms.
Recall that for these synthetic experiments that do not investigate scaling behaviour: we use point sets of between one and two million points.
For \competitorSimpleBTree, we have to omit the plots for the \textbf{Circle} distribution, as it times out after one hour.

\subparagraph{Overall trends.}
Figure~\ref{fig:sequences:overall} shows the running times for deletions and queries, as insertions and yes-queries exhibit similar trends.
In terms of update time, the \competitorSimpleBTree algorithm dominates performance on the \textbf{Bell} and \textbf{Box} data sets.
Both distributions have, in expectation, a convex hull of size $O(\log n)$, which is very sparse—indeed, even sparser than what we observed on the real-world data.
As a consequence, the algorithm performs very little work during an update: most deletions remove points that are not on the hull, and most insertions place points strictly inside the hull.
\competitorSimpleBTree also provides very strong query performance, and we again conclude that a naive, essentially non-dynamic, algorithm is preferred for application domains in which the convex hull is expected to be very~sparse.

\begin{figure}[h]
    \centering
    \includegraphics[]{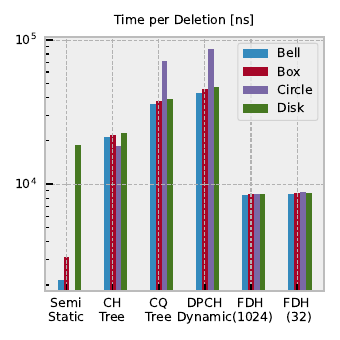}
          \includegraphics[]{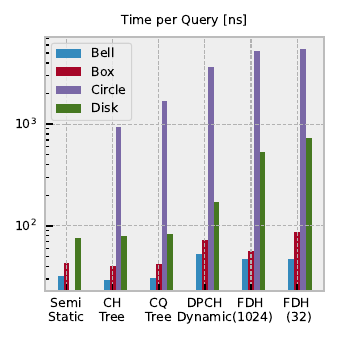}
        \caption{Time in nanoseconds per deletion (left) or queries (right) 
    on all four synthetic data sets. \vspace{-0.7cm}}
     \label{fig:sequences:overall}
\end{figure}

The \textbf{Disk} data is arguably our most realistic synthetic data set.
Here, we expect $O(n^{1/3})$ points to lie on the convex hull.
This aligns with our observations on real-world data, where arbitrary subsets consistently exhibit $O(n^{\varepsilon})$ points on the convex hull for some domain-dependent constant~$\varepsilon$.
Given the larger number of hull points, \competitorSimpleBTree begins to suffer in update performance.
It nevertheless remains the preferred method when compared to the published state of the art.
Indeed, we are somewhat surprised that these existing algorithms are outperformed by a naive static rebuilding strategy.
In contrast, our new technique supports updates approximately three times faster than this naive method.

When comparing \competitorFDH to all non-naive algorithms, we observe that our approach offers excellent update performance that is almost entirely agnostic to the input distribution.
This behaviour is consistent with our theoretical analysis: the tree-based solutions rely on algorithmic logic that depends strongly on the structure of the convex hull.
Our technique, by contrast, is primarily influenced by the \emph{number} of updates—since these determine when buckets are merged—rather than by the specific points involved.
In terms of query performance, our approach is approximately a factor of six slower on the \textbf{Disk} data.
On the \textbf{Circle} data, which has a very dense convex hull, the gap in update times widens while the gap in query times narrows slightly.
We claim that, taken together, the results on both the \textbf{Disk} and \textbf{Circle} distributions suggest that, under mixed update-query workloads, our method becomes the preferred technique compared to all alternatives, including \competitorSimpleBTree:
\vspace{-0.2cm}

    {%

\subparagraph{Mixing updates and queries.}
Our final point of comparison in the main body evaluates all algorithms on the \textbf{Disk} and \textbf{Circle} synthetic data sets. We refer to Appendix~\ref{sec:scaling} for our scaling experiments. 
As before, we fix a dynamic update sequence in which the point set size fluctuates between one and two million points.
We then use this data to consider ratios of insertions to deletions to queries of the form $1\!:\!1\!:\!x$, where $x$ ranges from $1$ to $25$ (see Figure~\ref{fig:sequences:mixing}).
We exclude the \textbf{Box} and \textbf{Bell} data sets from this comparison, as we have already established that \competitorSimpleBTree dominates in these settings.

Our first observation is that \competitorFDH is the preferred algorithm for ratios of $1\!:\!1\!:\!1$ and $1\!:\!1\!:\!2$, the latter corresponding to an equal number of updates and queries.
Both \competitorDPCH and \competitorCQTree are non-competitive across all ratios.
On the \textbf{Disk} data, both \competitorSimpleBTree and \competitorCHTree achieve reasonable performance.
For these methods, updates dominate the running time, and the proportion of queries must become very large before the overall running time decreases noticeably.
In contrast, for \competitorFDH the running time is dominated by queries, and thus increases sharply as the query ratio grows.
Nevertheless, \competitorFDH remains the best up to a ratio of $1\!:\!1\!:\!20$.

\begin{figure}[h]
    \centering
    \hspace{1.4cm}\includegraphics{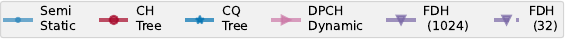}
    \includegraphics[]{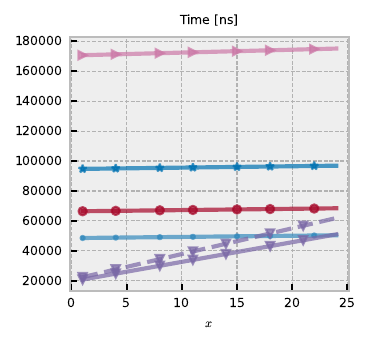}
          \includegraphics[]{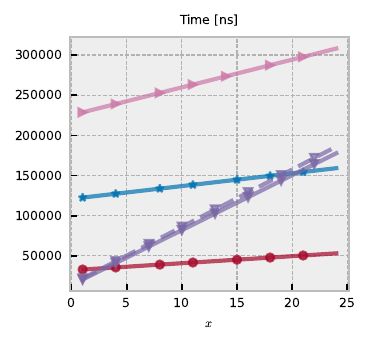}
    \caption{We plot the mean time in nanoseconds as we increase the ratio insertions:deletions:queries as $1:1:x$. We show the outcomes for the \textbf{Disk} data (left) and \textbf{Circle} data (right) \vspace{-0.6cm}}
    \label{fig:sequences:mixing}
\end{figure}

On the \textbf{Circle} data, \competitorSimpleBTree times out.
In this setting, every update triggers a linear-time rebuild of the convex hull, which is prohibitively expensive.
While \competitorFDH incurs significantly more expensive queries on this data set—primarily because many queries involve points that do \emph{not} lie inside the convex hull—our method must then compute tangents to each convex hull maintained in the data structure, which is costly.
Despite this data being  arguably unrealistic and adversarial to our technique, \competitorFDH remains the preferred approach for mixed workloads, even when updates and queries occur in equal proportion (i.e., $1\!:\!1\!:\!2$).

\section{Conclusion}

We present a new fully dynamic algorithm for supporting convex hull queries.
Our approach combines the logarithmic method with a deletion-only convex hull data structure, yielding amortised update times of $O(\log n \log \log n)$ and query times of $O(\log^2 n)$.
In contrast to existing techniques, our method deliberately trades slower queries for significantly faster updates.
We provide a non-trivial implementation that supports point-location queries, a particularly challenging and non-decomposable type of convex hull query.

We compare our implementation both to the state of the art and to a naive \competitorSimpleBTree baseline that rebuilds the convex hull whenever an update affects it.
Experiments on real-world data sets reveal that existing state-of-the-art techniques, in contrast to our implementation, lack robustness to duplicate coordinates.
These experiments further show that, for data sets with very sparse convex hulls, the naive \competitorSimpleBTree approach is the method of choice.

We argue, however, that realistic size-$n$ inputs often have convex hulls of size $O(n^{\varepsilon})$.
On such denser convex hulls, our method becomes the preferred approach for fully dynamic workloads that mix updates and queries.
In particular, update-heavy workloads strongly benefit from our technique, which is in line with its theoretical analysis.
On \textbf{Disk} data, our method remains preferable even when the ratio of updates to queries drops to $1\!:\!10$.
On the adversarial \textbf{Circle} data, where all points lie on the convex hull, performance is less decisive; nevertheless, down to a $1\!:\!1$ update--query ratio, our method still outperforms all competitors.
Our scaling experiments show that these trends persist as the workload increases.

Overall, we offer new theoretical insights and  extensive empirical validation.
We further contribute an openly available, robust implementation of fully dynamic convex hulls.

\bibliographystyle{plainurl}
\bibliography{references}
\clearpage
\appendix

\section{Scaling Experiments}\label{sec:scaling}
In this section we review the performance of the algorithms for increasing problem sizes across our \textbf{Box}, \textbf{Bell}, \textbf{Disk} and \textbf{Circle} synthetic data.
Our findings are best summarised by Figure~\ref{fig:full_data_scaling} which plots the median time taken on each data set on a 1:1:1 ratio of insertions, deletions and queries (for scaling input sizes $n$ up to $n = 2^{24}$

These plots are consistent with our findings in the main body: on the two \textbf{Box} and \textbf{Bell} data sets that have very sparse convex hulls, the \competitorSimpleBTree method dominates in performance. 
On data sets with denser hulls it becomes non-competitive and on \textbf{Circle} it even times out. Our \competitorFDH method is second-best on sparse data and the preferred method on dense convex hulls. 
Our scaling experiments show that this remains true as the input size increases and we would argue that the scaling behaviour of these algorithms is roughly identical. The exception is the  \competitorSimpleBTree algorithm. This algorithm has a scaling that is linear in the expected complexity of the convex hull and so its scaling behaviour is domain-dependent. This is illustrated on the \textbf{Disk} data where its running time deteriorates significantly faster than all other algorithms.

\begin{figure}[H]
    \centering
    \hspace{1cm}\includegraphics{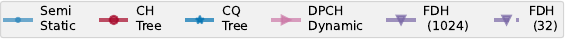}
    \includegraphics{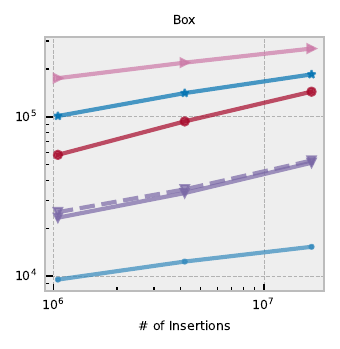}   
    \includegraphics{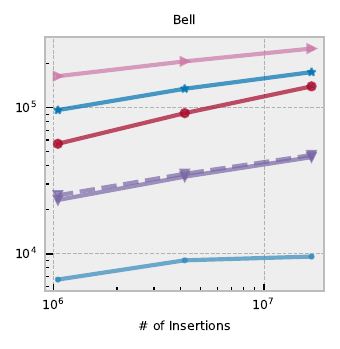}
    \includegraphics{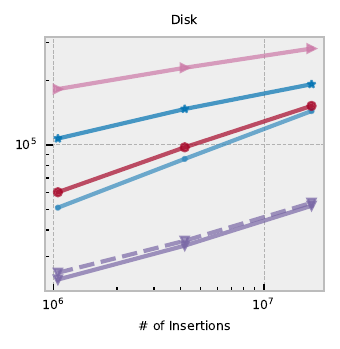}
    \includegraphics{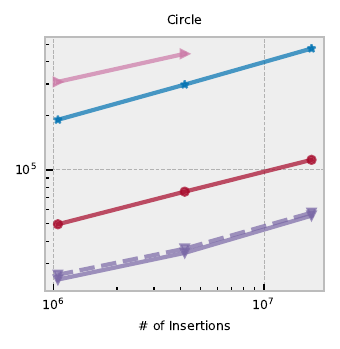}
    \caption{We plot the median time taken by an operation, where operations have a 1:1:1 ratio between insertions, deletions and queries. 
    We show this plot for each of our four synthetic data generators as the input sizes increase. }
    \label{fig:full_data_scaling}
\end{figure}

\subparagraph{Bell Data.} 
 In Figure~\ref{fig:scaling:bell} we present the results for \textbf{Bell} generated data.
 In general, \competitorDPCH is slower than all other methods for all tasks. Our new \competitorFDH methods can match the insertion time for $n=2^{24}$ of \competitorSimpleBTree.
For deletions the \competitorSimpleBTree algorithm has a near constant time per deletion, being up to 8.6 times faster than the next fastest,  \competitorFDH.
All queries in this instance are yes queries, and our \competitorFDH methods are faster than \competitorDPCH. Interestingly, the \competitorCHTree and \competitorCQTree have a slight edge over \competitorSimpleBTree.

\begin{figure}
    \centering
    \hspace{1.4cm}\includegraphics{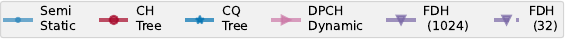}\newline
    \includegraphics[]{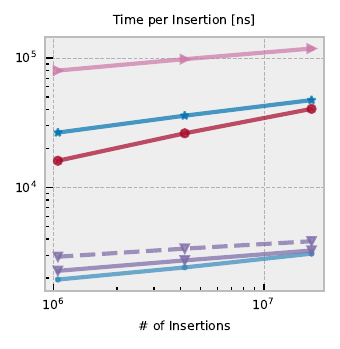}
    \includegraphics[]{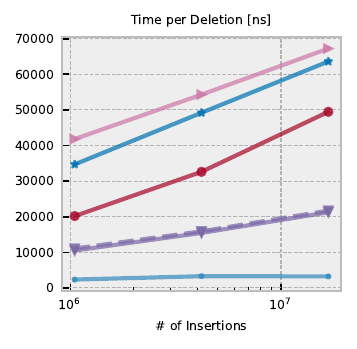}
    \includegraphics[]{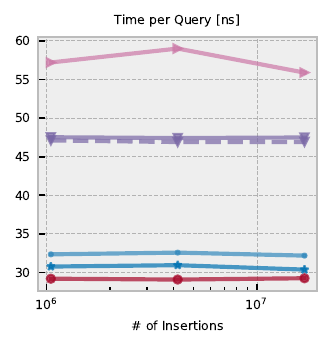}
    \includegraphics[]{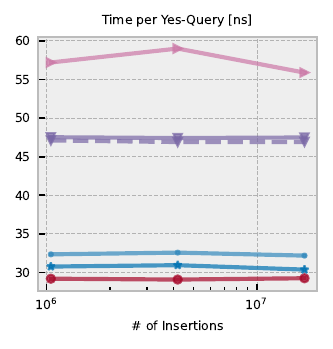}
    
    \caption{Update and query times in nanoseconds for the \textbf{Bell} data generator.}
    \label{fig:scaling:bell}
\end{figure}
\subparagraph{Box Data.} Figure~\ref{fig:scaling:box} shows the results for \textbf{Box} generated data. For insertions, our \competitorFDH method is faster than \competitorSimpleBTree.
When the first-bucket is bigger, the effect is more pronounced. The \competitorCHTree and \competitorCQTree are more than one order of magnitude slower, while, \competitorDPCH is nearly two orders slower.
When deleting, the small hulls for \textbf{Box} data is not affected that often. The \competitorSimpleBTree is the only algorithm not requiring work on a non-changing deletion.
Therefore, it is the most efficient method for deleting.
All queries are yes-instances for this generator.
The time per query is highly dependent on the size of the first bucket for the \competitorFDH algorithm. With a bigger bucket, the overall number of buckets is lower, resulting in up to 40\% faster queries.
Naturally, the \competitorCHTree,\competitorCQTree and \competitorSimpleBTree offer the fastest query responses,being about a third faster than \competitorFDH.
\begin{figure}
    \centering
    \hspace{1.4cm}\includegraphics{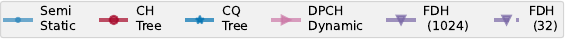}\newline
    \includegraphics[]{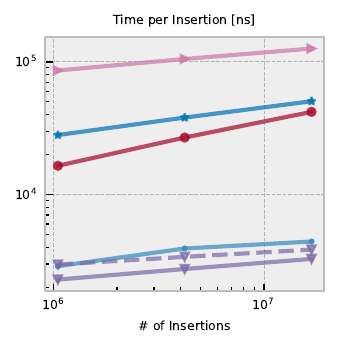}
    \includegraphics[]{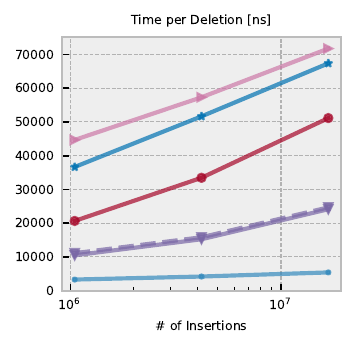}
    \includegraphics[]{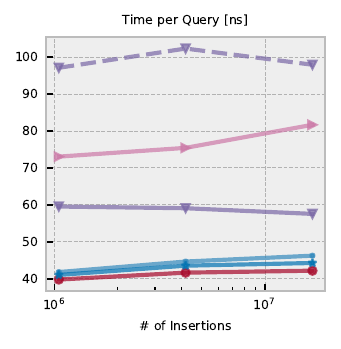}
    \includegraphics[]{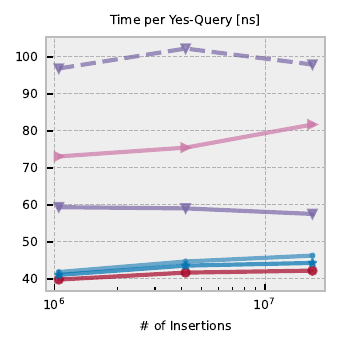}
        \caption{Update and query times in nanoseconds for the \textbf{Box} data generator.}
    \label{fig:scaling:box}
\end{figure}

\subparagraph{Disk Data.}
Figure~\ref{fig:scaling:disk} shows the result for data following a disk distribution.
Since $n^{1/3}$ points are on the hull, the naive \competitorSimpleBTree struggles by having to recompute the hull repeatedly. The \competitorFDH method can take advantage of its logarithmic structure and is the fastest for insertions and deletions by a wide margin.
The logarithmic structure is also the reason, why we see higher response times for queries.
The \competitorFDH with a first bucket size of 1024 is st least 2.25 times faster per yes-query than its size 32 implementation. For queries overall, this trend is also observable.
The other methods, maintaining only one hull, answer the queries faster, yes-queries at least 25\%.

\begin{figure}
    \centering
    \hspace{1.4cm}\includegraphics{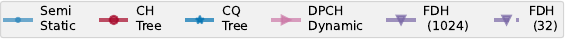}\newline
    \includegraphics[]{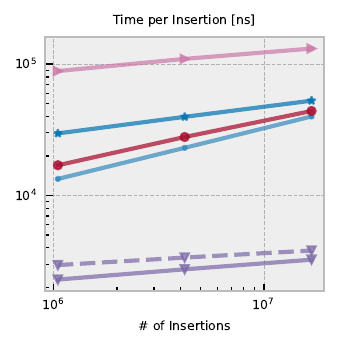}
    \includegraphics[]{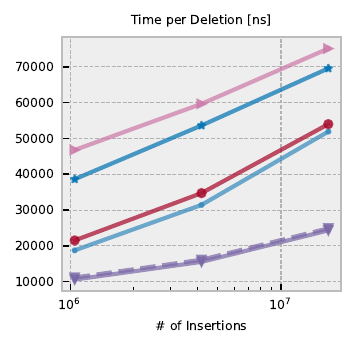}
    \includegraphics[]{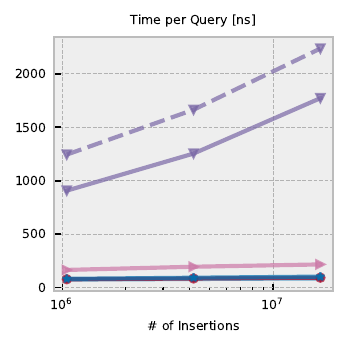}
    \includegraphics[]{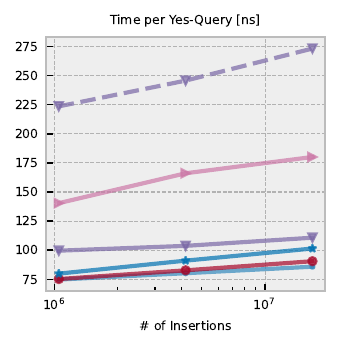}
        \caption{Update and query times in nanoseconds for the \textbf{Disk} data generator.}
    \label{fig:scaling:disk}
\end{figure}
\subparagraph{Circle Data.} Figure~\ref{fig:scaling:circle} contains the result for the extreme case, where all points lie on a circle and their hull. Naturally, \competitorSimpleBTree times out on every instance and is not present in the plots. \competitorDPCH exceeds the one hour runtime cut-off for $n=2^{24}$.
 \competitorFDH is one order of magnitude faster for insertions than the second fastest, \competitorCHTree. \competitorFDH also manages deletions the fastest. For queries we can observe two trends.
Firstly, answering yes-queries is faster with our new method, while overall queries (skewed by no-queries) are  costly with our new method. 
The \competitorFDH has to collect the expanding points for a no-query point in every $\log{n}$ sub-hull and compute the outer hull.

\begin{figure}
    \centering
    \hspace{1.4cm}\includegraphics{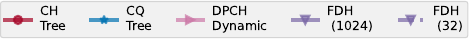}\newline
    \includegraphics[]{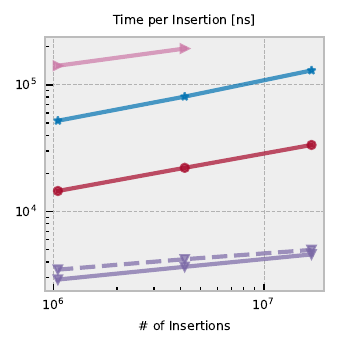}
    \includegraphics[]{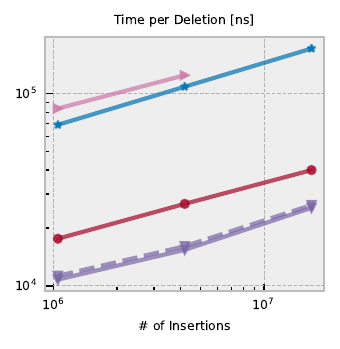}
    \includegraphics[]{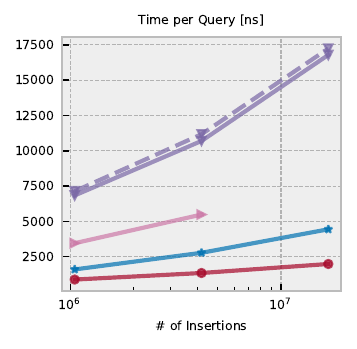}
    \includegraphics[]{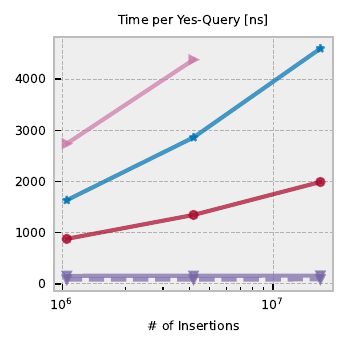}
    
    \caption{Update and query times in nanoseconds for the \textbf{Circle} data generator.}    \label{fig:scaling:circle}
\end{figure}

\end{document}